\begin{document}

\title{\Large When and Where: Predicting Human Movements Based on Social Spatial-Temporal Events}
\author{Ning Yang\thanks{College of Computer Science, Sichuan University, China. yangning@scu.edu.cn} \\
\and 
Xiangnan Kong \thanks{Department of Computer Science, University of Illinois at Chicago, USA. xkong4@uic.edu}\\
\and
Fengjiao Wang \thanks{Department of Computer Science, University of Illinois at Chicago, USA. fwang27@uic.edu}\\
\and
Philip S. Yu \thanks{Department of Computer Science, University of Illinois at Chicago, USA. psyu@uic.edu} \\}
\date{}

\maketitle


\begin{abstract} \small\baselineskip=9pt 
Predicting both the time and the location of human movements is valuable but challenging for a variety of applications. To address this problem, we propose an approach considering both the periodicity and the sociality of human movements. We first define a new concept, Social Spatial-Temporal Event (SSTE), to represent social interactions among people. For the time prediction, we characterise the temporal dynamics of SSTEs with an ARMA (AutoRegressive Moving Average) model. To dynamically capture the SSTE kinetics, we propose a Kalman Filter based learning algorithm to learn and incrementally update the ARMA model as a new observation becomes available. For the location prediction, we propose a ranking model where the periodicity and the sociality of human movements are simultaneously taken into consideration for improving the prediction accuracy. Extensive experiments conducted on real data sets validate our proposed approach.
\end{abstract}

\section{Introduction}

Recently, predicting human movements has been attracting considerable interest because of its great value in a variety of applications such as location-based services, monitoring of epidemic\cite{r26,r1,r2}. Existing researches share a common idea to model human movements based on sequential patterns discovered from historical trajectories\cite{r5,r11,r12,r20,r21,r22}, and have two major shortcomings. Firstly, existing approaches mainly focus on predicting the next locations only, but neglect the occurrence time. In many real-world applications, it is usually essential to know both the next location where a given person will appear, and the time when he/she will appear at that location. For example, in homeland security, it is very important to estimate both when and where a suspect will appear. In epidemic monitoring, accurate prediction of both when and where people will be infected by a kind of deadly virus is vital for the prevention of disease outbreak. Secondly, existing prediction models based on sequential patterns are not easy to understand while assuming the activities of different people are independent from each other.

In fact, more and more research evidences show that most of human movements are strongly periodic, and can be understood in the context of social relationships \cite{r1, r2, r4, r13, r18, r19, r26}. For example, when people go out, they are more likely to party with friends, or meet with colleagues than going to a random place, and these social interactions often take place regularly. These research evidences and observations suggest that the prediction of human movements can be largely reduced to the prediction of social interactions. 

Motivated by the above applications and observations, in this paper we aim at the problem of predicting both the time and the location of a given person's next movement, by considering his/her social relationship information. For this purpose, there are two main issues we have to address: 
\begin {compactenum} [(1)]

\item Social relationships are generally represented as graphs, while the time and the location of human movements are usually modeled as time series or sequences. So, how to utilize social information, the graph data, to facilitate the prediction of time and location, the sequence data?  

\item Intuitively, it is hard to estimate the time and the location at the same time, so, which prediction should be made first, and how does one help the other? 

\end {compactenum}

For the first issue, we propose a new concept, \emph{Social Spatial-Temporal Event} (\emph{SSTE}), to model social interactions between a given person and his/her friends. Essentially, an SSTE of a given person is a subgraph where nodes represent the given person and a subset of his/her friends. We associate each subgraph with two spatial-temporal attributes, i.e., the time when and the location where a social interaction occurs. SSTE integrates the social information and the spatial-temporal information of human movements, so the regularity of SSTEs of a given person can reveal the dynamics of social interactions of that person, which makes it reasonable and feasible to formulate the prediction of human movements as the time and location predictions of the next SSTE of a given person. 

For the second issue, our idea is in light of the plenty of researches that indicate the human movement exhibits a strong periodicity \cite{r1, r2} and a strong sociality \cite{r13}. The periodicity implies that the location depends on its occurrence time. For example, the places where people meet on weekday are reasonably different than those on weekend. The sociality implies that the location of an SSTE of a given person is also influenced by the preference of his/her friends. That is to say the place where an event occurs will be influenced by not only the friends or the participants, but also the time of occurrence. Hence making time prediction first will be better able to exploit these  properties. We thus make time prediction first, then use the predicted time combined with the social information as the input of the location prediction. 

\textbf{Time Prediction} The goal of the time prediction of an SSTE is to estimate how long it will take before the next SSTE involving a given person occurs. We characterize the temporal dynamics of SSTEs of a given person with an ARMA (AutoRegressive Moving Average) model \cite{r10}. The time prediction based on the ARMA model only depends on the mean value and the autocorrelation function of the time interval series. This is extremely important for our goal because of the difficulty of estimating all of the joint distributions on massive data. 

As user behavior can change from time to time, an ARMA model whose parameters keep constant will not be able to fully capture the behavior dynamics. Here we propose a learning algorithm which employs Kalman Filter \cite{r14} to learn and incrementally update the parameters of the prediction model as a new observation becomes available. Our learning algorithm requires only the latest observation to update the ARMA model, which is in sharp contrast with the traditional methods like MLE (Maximum Likelihood Estimation) that recalculate the parameters based on the whole data every time when a model update is required.

\textbf{Location Prediction} 
By combining the periodicity and sociality of human movements, we propose a ranking model by which we rank all the candidate locations and output the locations with higher ranking scores as the prediction result.

In summary, the contributions of this paper are as follows:
\begin {compactenum} [(1)]

\item We reduce the prediction of human movements to the prediction of social interactions causing those movements. For this purpose, we propose a new concept, Social Spatial-Temporal Event, to characterize the periodicity and sociality of human movements. 

\item For the prediction of the occurrence time of an SSTE, we propose an ARMA based prediction model and a Kalman Filter based learning algorithm which is capable of incrementally updating the prediction model as a new observation becomes available. 

\item We propose a ranking model for the prediction of an SSTE location, where the periodicity and the sociality of human movements are simultaneously taken into consideration for improving the prediction accuracy. 

\item The performance of our proposed methods are validated via the extensive experiments conducted on real data sets.

\end {compactenum}

The rest of this paper is organized as follows. The preliminary concepts and terminologies are introduced in Section 2. The concept of SSTE and its detection are described in Section 3. The time prediction model and algorithm are presented in Section 4. The ranking model and the location predicting algorithm are worked out in Section 5. We analyze the experimental results in Section 6. At last, we briefly review the related work in Section 7 and conclude in Section 8.

\section{Preliminaries}
In this section we briefly introduce the basic concepts of ARMA model and Kalman Filter.

\subsection {AutoRegressive-Moving-Average Model}

AutoRegressive-Moving-Average (ARMA) model is a powerful tool to analyze a stationary stochastic process. Given a time series of historical data $x^t = (x_1, x_2$, $\cdots, x_t)$, where $x_i (1 \le i \le t)$ is the data vector at time $i$, the ARMA model can predict the future value $\hat{x}_{t+1}$. An ARMA model consists of two parts, an autoregressive (AR) part and a moving average (MA) part. The AR model of order $p$, AR($p$), is given by the following equation:
\begin{equation}
    \Phi(B)(x_{t}) = 0,
\end{equation}
where 
\begin{displaymath}
	\Phi(z)=1-\phi_{1}z-\cdots-\phi_{p}z^p,
\end{displaymath}
and $B$ is the backward shift operator, $B^jx_{t} =x_{t-j}$.
The MA model of order $q$, MA($q$), is given by the following equation:
\begin{equation}
    \Theta(B)\varepsilon_{t} = 0,
\end{equation}
where
\begin{displaymath}
	\Theta(z)=1-\theta_{1}z-\cdots-\theta_{q}z^q,
\end{displaymath}
and $\varepsilon_{t}$ is a white noise with zero mean and variance $\sigma^2$.
Then by combining the equation (2.1) and (2.2), the ARMA model of order ($p,q$), can written as:
\begin{equation}
\Phi(B)(x_{t}) = \Theta(B)\varepsilon_{t}.
\end{equation}

\subsection {Kalman Filter}
Kalman filter is a set of mathematical equations that provides an efficient computational (recursive) means to estimate the state of a linear dynamic process from a series of noisy measurements, in a way that minimizes the mean of the squared error.
Kalman Filter assumes the true state at time $k$ evolves from the previous state at $(k-1)$, i.e.,
\begin{equation}
    x_k = Ax_{k-1}+Bu_{k-1}+w_{k-1},
\end{equation}
where
$u_{k-1}$ and $w_{k-1}$ are the control input and the process noise respectively, $A$ is the state transition matrix and $B$ is the control input matrix. $A$ might change over time, but here we assume it is constant. Since the control input is optional, so for simplicity, we assume $u_k=0$ for any $k$. The process noise $w_k$ is assumed
to follow a normal distribution with zero mean and covariance $Q_k$,
\begin{equation}
    p(w_k) \sim N(0, Q_k).
\end{equation}

At time $k$, an observation (or measurement) $z_k$ of the true state $x_k$ is
\begin{equation}
    z_k = Hx_k+v_k,
\end{equation}
where $H$ is the measurement matrix and $v_k$ is the observation noise which is also assumed to follow a normal distribution with zero mean and covariance $R_k$,
\begin{equation}
    p(v_k) \sim N(0, R_k).
\end{equation}

Our objective is to predict $\hat{x}_{k|k}$, the posteriori state estimate at time $k$ given observations up to $k$. This computation can be divided into two alternate phases, "Predicting" and "Updating". In the predicting phase, the priori state estimate $\hat{x}_{k|k-1}$ is produced from the posteriori state estimate at previous time step $\hat{x}_{k-1|k-1}$ \cite{r14}, i.e., 
\begin{equation}
    \hat{x}_{k|k-1} = A\hat{x}_{k-1|k-1}+Bu_{k-1},
\end{equation}
and
\begin{equation}
    P_{k|k-1} = AP_{k-1|k-1}A^T+Q_{k},
\end{equation}
where $P_{k|k-1}$ is the covariance matrix of the priori estimate.

In the updating phase, the current posteriori state estimate is generated by combining the current priori estimate with the current observation information. Kalman Filter has a set of formulas well-established for the update phase, respectively shown as follows \cite{r14}.

Innovation (or measurement residual):
\begin{equation}
    y_k = z_k-H\hat{x}_{k|k-1}.
\end{equation}
Innovation covariance:
\begin{equation}
    S_k = HP_{k_k-1}H^T+R_k.
\end{equation}
Optimal Kalman gain:
\begin{equation}
    K_k = P_{k|k-1}H^TS^{-1}_k.
\end{equation}
Posteriori state estimate:
\begin{equation}
    \hat{x}_{k|k} =\hat{x}_{k|k-1}+K_ky_k. 
\end{equation}
Posteriori estimate covariance:
\begin{equation}
    P_{k|k} = (I-K_kH)P_{k|k-1}.
\end{equation}

\section {Social Spatial-Temporal Event}

What LBSNs provide us contain a friendship graph and whole sequences of recorded checkins. The friendship graph is denoted by $G(V,E)$ where $G.V$ is the collection of persons and $G.E$ is the set of edges. There exists an edge between two persons if they are friends. The checkin sequence is denoted by $R=(r_1,r_2,\dots,r_n)$ where $r_i$ is the $i$th checkin in order of time. The concept of checkin and Social Spatial-Temporal Event are defined as follows:

\begin{Definition}
{\rm \textbf{Checkin} a checkin record $r$ is a 3-tuple $(v,\tau,\sigma)$, where $r.v \in G.V$ is the person who checks in, $r.\tau$ is the checkin time, $r.\sigma$ is the coordinates of the checkin location.}
\end{Definition}
 
\begin{Definition}
{\rm \textbf{Social Spatial-Temporal Event (SSTE)} an SSTE, denoted by $a$, is a 3-tuple $(G,\tau,\sigma)$, where $a.G$ is a subgraph of the friendship graph $G$, $a.\tau$ is the occurrence time of $a$, and $a.\sigma$ is the coordinates of the location where $a$ ocurrs.}
\end{Definition} 

There exist a lot of time-evolving clustering algorithms that can serve the detection of SSTEs from a checkin sequence. In this paper, we implement the SSTE detection by the density based temporal clustering algorithm proposed by Yixin Chen et al. \cite{r9}, because its time-evolving weight schema is suitable for the requirement to adaptively determine the temporal boundary between two successive SSTEs.

\section{Time Prediction}
Now we turn our attention to the time prediction of SSTE, which can be formalized as follow:

\textbf{Time Prediction} With the known SSTE sequence involving person $u$, denoted as 
\begin{displaymath}
	A^n_u=(a_{u,1},a_{u,2}, \dots, a_{u,n}), \forall a_{u,i}, 1 \le i \le n, u \in a_{u,i}.V
\end{displaymath} 
where $a_{u,i}$ is the $i$th SSTE of $u$, we want to predict the occurrence time of the next SSTE of $u$, i.e., $a_{u,n+1}.\tau$.

The time interval series $x^t_u$ can be derived from $A^n_u$, i.e. $x^t_u=(x_{u,1},x_{u,2}, \dots, x_{u,t})$, where $t = n-1$, $x_{u,i} = a_{u,i+1}.\tau-a_{u,i}.\tau, 1 \le i \le t$. Now the original problem turns out to be the prediction of $x_{u,t+1}$ with known $x^t_u$.

\subsection{Model Selection}

For a given person $u$, we can model $x^t_u$ by an ARMA($p,q$) model given by equation (2.3):
\begin{equation}
\Phi(B)(x_{u,t}) = \Theta(B)\varepsilon_{u,t},
\end{equation}
where 
\begin{displaymath}
	\Phi(z)=1-\phi_{u,1}z-\cdots-\phi_{u,p}z^p,
\end{displaymath}
\begin{displaymath}
	\Theta(z)=1-\theta_{u,1}z-\cdots-\theta_{u,q}z^q,
\end{displaymath}
and $B$ is the backward shift operator, $B^jx_{u,t} =x_{u,t-j}$, and $\varepsilon_{u,t}$ is a white noise with zero mean and variance $\sigma^2_u$.

The task of model selection is to determine the orders $p$ and $q$, which can be fulfilled by the following canonical procedure defined in time series analysis\cite{r10}:
\begin {enumerate} [(1)]

\item If the ACF (Auto-Covariance Function) of the $x^t_u$ cuts off after lag $q$ and the PACF (Partial ACF) tails off, $x^t_u$ is modeled as ARMA($0, q$);

\item If the PACF cuts off after lag $p$ and the ACF tails off, $x^t_u$ is modeled as ARMA($p, 0$);

\item If both the ACF and the PACF tail off, $x^t_u$ is modeled as ARMA($p,q$) where $p$ and $q$ are determined by the AIC (Akaike's Information Criterion) \cite{r10}.

\end {enumerate}

Note that $x^t_u$ should be substituted with its first or even higher order difference series if $x^t_u$ is not stationary, since the above procedure can apply only to a stationary process. 

\subsection{Learning the Model}

Once the model's orders are determined, the parameters, $\phi_{u,1}, \cdots, \phi_{u,p}$, $\theta_{u,1}, \cdots, \theta_{u,q}$ have to be learned. To make Kalman Filter applicable to our situation, the challenge here is to establish our own \emph{measurement equation} and \emph{state equation} \cite{r14}. 

\textbf{Measurement Equation} By equation (2.1), we have
\begin{equation}
x_{u,t} = H_{u,t}^T\Phi_{u,t}+\Theta^T_u\boldsymbol{\varepsilon}_{u,t},
\end{equation}
where 
\begin{displaymath}H_{u,t}=[x_{u,t-1},x_{u,t-2},\cdots,x_{u,t-p}]^T,\end{displaymath}
\begin{displaymath}\Phi_{u,t}=[\phi_{u,1},\cdots, \phi_{u,p}]^T,\end{displaymath}
\begin{displaymath}\Theta_u=[1, -\theta_{u,1}, \cdots, -\theta_{u,q}]^T,\end{displaymath}
\begin{displaymath}\boldsymbol{\varepsilon}_{u,t} = [\varepsilon_{u,t},\varepsilon_{u,t-1}, \cdots, \varepsilon_{u,t-q}]^T.\end{displaymath}
In light of the theory of Bayesian statistics, our idea here is to regard the parameter vector $\Phi_{u,t}$ as the random vector representing the state of the underlying system, and $x_{u,t}$ as the observable variable (i.e., $\Phi_{u,t}$ plays the role of $x_k$ in equation (2.4) and $x_{u,t}$ plays the role of $z_k$ in equation (2.6)). We hence let equation (4.16) be the measurement equation of Kalman Filter, where time-varying matrix $H_{u,t}$ is the measurement matrix, and $\Theta^T_u\boldsymbol{\varepsilon}_{u,t}$ is the measurement noise with zero mean and the variance 
\begin{displaymath}r_u = (1+\sum_{i=1}^q{\theta_{u,i}^2})\sigma^2_u.\end{displaymath}

\textbf{State Equation} The state equation is a bit less obvious. Considering equation (4.15) is an ARMA model, we deliberately establish the state equation as
\begin{equation}
\Phi_{u,t}= \Phi_{u,t-1} + W_{u,t},
\end{equation}
where the $p \times 1$ matrix $W_{u,t}$ is the process noise which is assumed to be independent, white, and with zero mean and the covariance $Q_u$, a $p \times p$ matrix.  

Let $\hat{\Phi}_{u,t}$ be the estimate of ${\Phi}_{u,t}$, $e_{u,t}$ be the estimate error, 
\begin{displaymath}e_{u,t}=\Phi_{u,t}-\hat{\Phi}_{u,t}.\end{displaymath}
Let $P_{u,t}$ be the covariance matrix of the estimate error, i.e., 
\begin{displaymath}P_{u,t} = E[e_{u,t}e_{u,t}^T].\end{displaymath}
By combining the time update procedure and the measurement update procedure in standard Kalman Filter algorithm \cite{r14}, we have the following iterative equations for the parameter update when a new observation $x_{u,t}$ becomes available:
\begin{equation}
K_{u,t+1} = (P_{u,t}+Q_u)H_{u,t} [H^T_{u,t}(P_{u,t}+Q_u)H_{u,t}+r_u]^{-1}
\end{equation}
\begin{equation}
P_{u,t+1} = (I-K_{u,t+1}H_{u,t}^T)(P_{u,t}+Q_u)
\end{equation}
\begin{equation}
\hat{\Phi}_{u,t+1} = \hat{\Phi}_{u,t}+K_{u,t+1}(x_{u,t}-H_{u,t}^T\hat{\Phi}_{u,t}),
\end{equation}
where $I$ is the unit matrix and $K_{u,t+1}$ is the Kalman gain, a $p \times 1$ matrix. 

\textbf{Initialization} We use traditional MLE once to initialize $\hat{\Phi}_{u,0}$, $\Theta_u$, and $\sigma^2_u$. As for any $P_{u,0} \ne 0$ and for any $Q_u$ the estimate given by equation (4.20) would eventually converge \cite{r14}, we simply presume $P_{u,0} = 1_{p \times p}$ and $Q_u = 0_{p \times p}$.

As the result of this subsection, the parameter learning algorithm is given in Algorithm 1.

\begin{algorithm}[!ht]
  \caption{ \emph{LearnByKF}$(\hat{\Phi}_{u,t}, x_{u,t}, H_{u,t}, P_{u,t})$ }
  \label{alg:LearnByKF}
  \begin{algorithmic}[1]
    \REQUIRE ~~ \\
       $\hat{\Phi}_{u,t}$: the estimator of the last time; \\
       $x_{u,t}$: the new observation; \\
       $H_{u,t}$: the historical observations; \\
       $P_{u,t}$: the covariance of estimate error at time $t$;
    \ENSURE ~~
       $\hat{\Phi}_{u,t+1}$: the new estimate of parameters;
    \STATE Update the Kalman gain $K_{u,t+1}$ by equation (4.18);
    \STATE Update the covariance $P_{u,t+1}$ by equation (4.19);
    \STATE Generate the new estimate $\hat{\Phi}_{u,t+1}$ by equation (4.20);
  \end{algorithmic}
\end{algorithm}

\subsection{Making Time Prediction}

In this paper, we define the best estimate of $x_{u,t+1}$, $\hat{x}_{u,t+1}$, to be the predictor with minimum mean square error, i.e., given the history $X^t_u$,
\begin{displaymath}
	\hat{x}_{u,t+1}=E[X_{u,t+1}|X^t_u].
\end{displaymath}
where $X^t_u$ represents the random sequence corresponding to $x^t_u$, i.e., $X^t_u=(X_{u,1},X_{u,2}, \dots, X_{u,t})$. Concretely, we can compute $\hat{x}_{u,t+1}$ in terms of the following theorem.

\begin{theorem}
The best estimate of $x_{u,t+1}$ is 
\begin{equation}
\hat{x}_{u,t+1} = \hat{\Phi}_{u,t+1}^TH_{u,t+1}-\sum_{i=1}^{q}{\theta_{u,i}\varepsilon_{u,t-i+1}}.
\end{equation}
\end{theorem}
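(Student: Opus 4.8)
The plan is to start from the measurement equation (4.16) advanced by one time step and then take a conditional expectation given the observed history. Shifting (4.16) from $t$ to $t+1$ yields
\begin{displaymath}
x_{u,t+1} = H_{u,t+1}^T\Phi_{u,t+1} + \Theta_u^T\boldsymbol{\varepsilon}_{u,t+1},
\end{displaymath}
and expanding the noise term with the definitions of $\Theta_u$ and $\boldsymbol{\varepsilon}_{u,t+1}$ gives
\begin{displaymath}
\Theta_u^T\boldsymbol{\varepsilon}_{u,t+1} = \varepsilon_{u,t+1} - \sum_{i=1}^{q}\theta_{u,i}\varepsilon_{u,t-i+1}.
\end{displaymath}
Applying $E[\,\cdot\mid X^t_u]$ to both sides and using linearity reduces the task to evaluating three conditional expectations.

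Next I would handle each term separately. For the autoregressive part, $H_{u,t+1}=[x_{u,t},\dots,x_{u,t-p+1}]^T$ is a deterministic function of $X^t_u$ and factors out of the expectation, while the remaining factor $E[\Phi_{u,t+1}\mid X^t_u]$ is exactly the Kalman estimate $\hat{\Phi}_{u,t+1}$ delivered by (4.20); this contributes $\hat{\Phi}_{u,t+1}^T H_{u,t+1}$. For the innovation term, $\varepsilon_{u,t+1}$ is a future white-noise sample, independent of the past and of zero mean, so $E[\varepsilon_{u,t+1}\mid X^t_u]=0$. For the moving-average term, each residual $\varepsilon_{u,t-i+1}$ with $1\le i\le q$ carries a time index $t-i+1\le t$, is recoverable from the observed series through the ARMA recursion, and is therefore measurable with respect to $X^t_u$, so it passes through the expectation unchanged. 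Recombining the three contributions reproduces (4.21).

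The step I expect to be the main obstacle is the identification $E[\Phi_{u,t+1}\mid X^t_u]=\hat{\Phi}_{u,t+1}$. Here I must invoke the optimality of the Kalman filter: under the linear state equation (4.17), the linear measurement equation (4.16), and the Gaussian noise assumptions, the posterior estimate computed by (4.18)--(4.20) is the minimum-mean-square-error estimator, which for this model coincides with the conditional mean of the state. A secondary subtlety is the measurability of the past residuals $\varepsilon_{u,t-i+1}$; I would justify it by noting that, once the parameters and the observed data are fixed, the residual sequence is determined recursively, so conditioning on $X^t_u$ holds these terms fixed rather than averaging over them. With these two points settled, the rest is the routine algebraic recombination sketched above.
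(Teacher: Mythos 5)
Your proposal is correct and follows essentially the same route as the paper's proof: advance the measurement equation one step, condition on the history $X^t_u$, and evaluate the three resulting terms (observed AR lags, future innovation, past MA residuals). Your write-up is in fact more careful on the two points the paper glosses over: you justify retaining the past residuals $\varepsilon_{u,t-i+1}$ by their \emph{measurability} with respect to $X^t_u$ (the paper incorrectly cites ``independence'' for these terms, which would instead force them to zero), and you make explicit the Kalman-optimality identification $E[\Phi_{u,t+1}\mid X^t_u]=\hat{\Phi}_{u,t+1}$, whereas the paper simply substitutes the estimates $\hat{\phi}_{u,i}$ into the ARMA equation as if they were constants.
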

\begin{proof}
By equation (4.16),
\begin {displaymath}
X_{u,t+1} = \sum_{i=1}^{p}{\hat{\phi}_{u,i}X_{u,t-i+1}}+\varepsilon_{u,t+1}-\sum_{i=1}^{q}{\theta_{u,i}\varepsilon_{u,t-i+1}}.
\end {displaymath}
So,
\begin{displaymath}
\begin{split}
\hat{x}_{u,t+1} =& \sum_{i=1}^{p}{\hat{\phi}_{u,i}E[X_{u,t-i+1}|X^t_u]}+E[\varepsilon_{u,t+1}|X^t_u] \\
&-\sum_{i=1}^{q}{\theta_{u,i}E[\varepsilon_{u,t-i+1}|X^t_u]}.
\end{split}
\end{displaymath}
Since $\varepsilon_{u,t}$ and $\varepsilon_{u,t-i+1} (1 \le i \le q)$ are independent of $X^t_u$, we have 
\begin{displaymath}
	E[X_{u,t-i+1}|X^t_u] = x_{u,t-i+1},
\end{displaymath}
\begin{displaymath}
	E[\varepsilon_{u,t+1}|X^t_u]=0, 
\end{displaymath}
\begin{displaymath}
	E[\varepsilon_{u,t-i+1}|X^t_u] = \varepsilon_{t-i+1}.
\end{displaymath}
Hence substituting them into the above equation gives equation (4.21).
\end{proof}

Since $\hat{x}_{u,t+1}$ is just the estimate of the time-gap between $a_{u,t}$ and $a_{u,t+1}$, the occurrence time of  $a_{u,t+1}$ will be further estimated by 
\begin{equation}
a_{u,t+1}.\hat{\tau}=a_{u,t}.\tau+\hat{x}_{u,t+1}.
\end{equation}
Algorithm 2 gives the complete procedure of time prediction.

\begin{algorithm}[!ht]
  \caption{ \emph{AlgKF}$(\hat{\Phi}_{u,t}, x^{t}_{u}, H_{u,t}, P_{ut})$ }
  \label{alg:AlgKF}
  \begin{algorithmic}[1]
    \REQUIRE ~~ \\
       $\hat{\Phi}_{u,t}$: the last estimate of the parameters; \\
       $x^{t}_u$: the historical observations; \\
       $H_{u,t}$: the measurement matrix; \\
       $P_{u,t}$: the covariance of estimate error at time $t$;
    \ENSURE ~~
       $a_{u,t+1}.\hat{\tau}$: the estimate of the occurrence time of $a_{u,t+1}$;
    \STATE $\hat{\Phi}_{u,t+1}=LearnByKF(\hat{\Phi}_{u,t}, x_{u,t}, H_{u,t}, P_{u,t})$;
    \FOR {$i=1$ to $q$}
     	\STATE $\varepsilon_{u,t-i+1}=x_{u,t-i+1}-\hat{x}_{u,t-i+1}$;
    \ENDFOR
    \STATE Compute $\hat{x}_{u,t+1}$ in terms of equation (4.21);
    \STATE Compute $a_{u,t+1}.\hat{\tau}$ in terms of equation (4.22);
  \end{algorithmic}
\end{algorithm}

Note that if $x^t_u$ is substituted with its difference series in equation (4.15), we have to further correct $\hat{x}_{u,t+1}$ with the differences just before line 6.

\begin{figure*}[!ht]
\centering
    \begin{minipage}[b]{0.32\textwidth}
        \centering
         \subfigure[CHI]{\epsfig{file=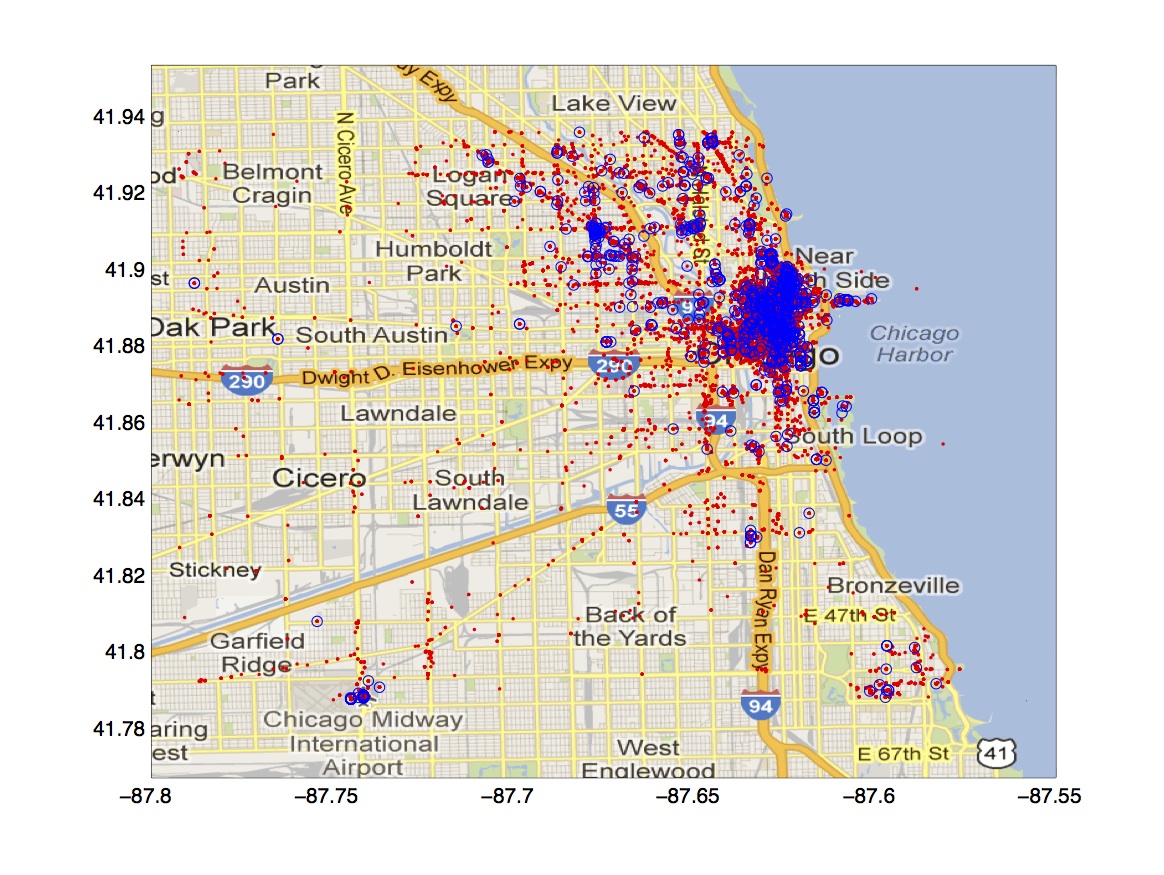, height=1.6in}}
    \end{minipage}
    \begin{minipage}[b]{0.32\textwidth}
         \centering
         \subfigure[NY]{\epsfig{file=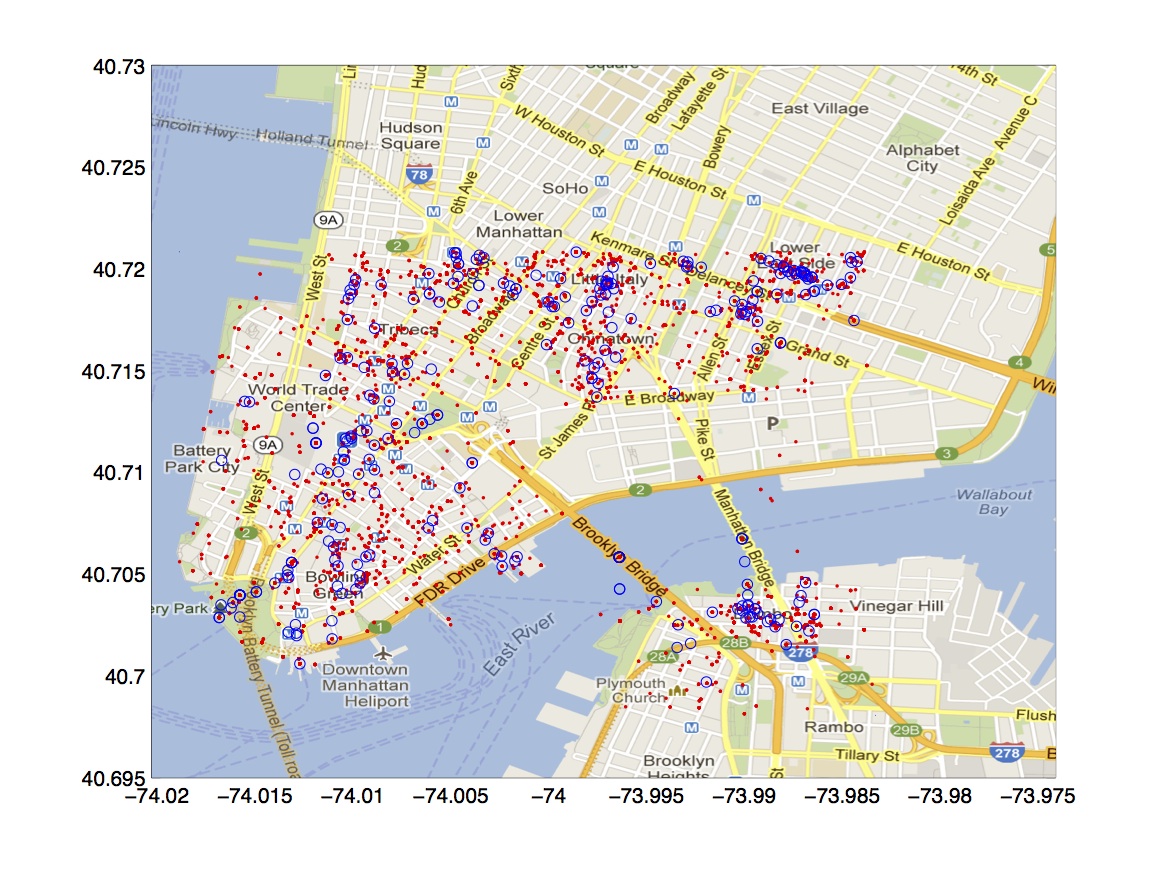, height=1.6in}}
    \end{minipage}
    \begin{minipage}[b]{0.32\textwidth}
         \centering
         \subfigure[LV]{\epsfig{file=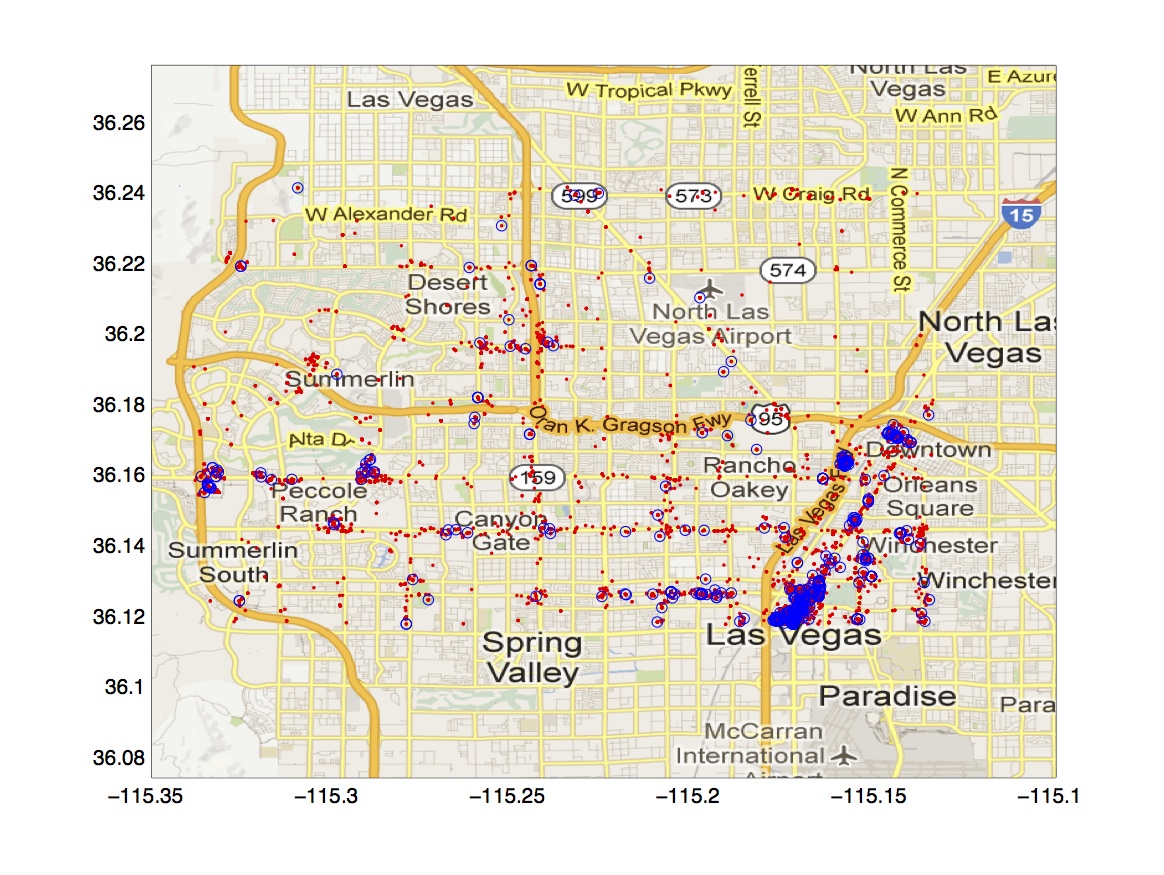, height=1.6in}}
    \end{minipage}\\
\caption{Geographic distribution of checkins and SSTEs.}
\end{figure*}

\section{Location Prediction}
The problem of the location prediction can be formalized as follow:

\textbf{Location Prediction} Given $A^n_u$, $u$'s friend set $F(u)$, and the predicted occurrence time $a_{u,n+1}.\hat{\tau}$, predict the location $a_{u,n+1}.\sigma$.

In order to predict the location at the future time point $a_{u,n+1}.\hat{\tau}$ predicted by the Kalman Filter based algorithm described in Section 3, we explore the concepts of the periodic behavior and the influence of friends. 

First, we consider the periodic behavior. We use one week as the period and one hour as the granularity and convert $a_{u,n+1}.\hat{\tau}$ to the hour of a weekday. For example, if the predicted time is noon next Friday, we will use the noon time locations of previous Fridays to help make the prediction. Let $L_u$ and $T_u$ be the random variables respectively representing the location and the time of an SSTE involving $u$. We define the temporal score of a candidate location $\lambda$ as:
\begin{equation}
\begin{split}
g_{temporal}(\lambda) &= P(L_u=\lambda|T_u=\hat{\tau}) \\
&=\frac{P(T_u=\hat{\tau}|L_u=\lambda)P(L_u=\lambda)}{P(T_u=\hat{\tau})},
\end{split}
\end{equation}
where $\hat{\tau} = a_{u,n+1}.\hat{\tau}$, and the probabilities can be respectively estimated by the following equations:
\begin{displaymath}
\begin{split}
P(T_u=\hat{\tau})=\frac{|\{a: a\in A^n \land a.\tau=\hat{\tau} \}|}{n},
\end{split}
\end{displaymath}
\begin{displaymath}
P(L_u=\lambda)=\frac{|\{a: a\in A^n \land a.\sigma=\lambda \}|}{n},
\end{displaymath}
\begin{displaymath}
P(T_u=\hat{\tau}|L_u=\lambda) = \frac{|\{a: a\in A^n \land a.\tau = \hat{\tau} \land a.\sigma=\lambda \}|}{|\{a: a\in A^n \land a.\sigma=\lambda \}|}.
\end{displaymath}

Next, we explore the influence of friends, which means the location of $a_{u,n+1}$ is also dependent on the locations where any of $u$'s friends previously checked in before $a_{u,n+1}.\hat{\tau}$. In addition, it is reasonable that more recent checkins are more influential. Hence we define the social score of a candidate location $\lambda$ as
\begin{equation}
g_{social}(\lambda) = \frac{W_F}{W},
\end{equation}
where $W_F$ is the total weight of $u$'s friends who previously checked in $\lambda$ before $\hat{\tau}$,
\begin{displaymath}
	W_F= \sum_{r \in R \land r.v \in F(u) \land r.\sigma=\lambda }r.w(\hat{\tau}),
\end{displaymath}
 and $W$ is the total weight of all of $u$'s friends,
\begin{displaymath}
	W= \sum_{r \in R \land r.v \in F(u)}r.w(\hat{\tau}).
\end{displaymath}

By combining equations (4.9) and (4.10),  the overall ranking score of a candidate location is defined as
\begin{equation}
g(\lambda) = \xi g_{temporal}(\lambda) + (1-\xi)g_{social}(\lambda),
\end{equation}
where $\xi \in [0,1]$ is the weight reflecting how important the periodicity is. In practice, we choose all the locations where $u$ and any one of $u$'s friends previously checked in before $a_{u,n+1}.\hat{\tau}$ as the candidate location collection.

At last, note that it is hard and pointless to predict the accurate coordinates of an SSTE location. So as a preprocessing step before making location prediction, we discretize the coordinates of locations to regions by Voronoi algorithm \cite{r25}, where the regions are defined by the locations of bus stops. Then we practically predict the region instead of the coordinates. 

\section{Experimental Evaluation}
The objectives of experiments are to verify the validity of the SSTE detection algorithm and respectively evaluate the prediction models for SSTE time and location. All the experiments are conducted on a PC with Intel Core I7 CPU 2.0G HZ, 8 GB main memory. All the algorithms are implemented in MATLAB 2011b. 

\begin{figure*}[!ht]
\centering
    \begin{minipage}[b]{0.32\textwidth}
        \centering
         \subfigure[CHI]{\epsfig{file=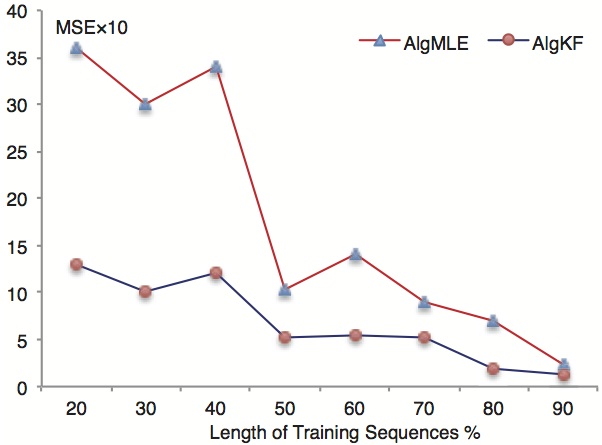, height=1.6in}}
    \end{minipage}
    \begin{minipage}[b]{0.32\textwidth}
         \centering
         \subfigure[NY]{\epsfig{file=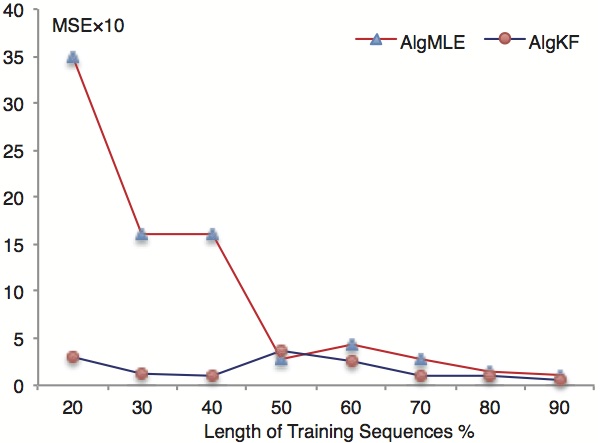, height=1.6in}}
    \end{minipage}
    \begin{minipage}[b]{0.32\textwidth}
         \centering
         \subfigure[LV]{\epsfig{file=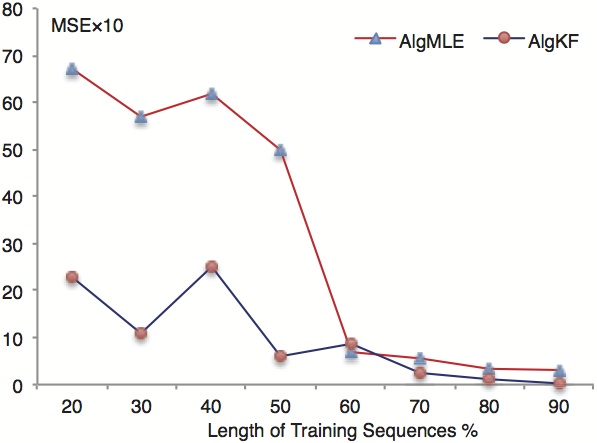, height=1.6in}}
    \end{minipage}\\
\caption{Error of time prediction.}
\end{figure*}

\textbf{Data Sets} We use three real checkin data sets and the corresponding friendship graphs offered by \cite{r16}, whose checkin areas are around Chicago (CHI), New York (NY) and Las Vegas (LV) respectively. CHI, NY and LV respectively contain 38,830, 2296, and 17,042 social checkins in one year. The data distributions are shown in Figure 1 (a), (b) and (c), where a red dot represents a checkin and a blue circle an SSTE.

\subsection{Test of Time Prediction}
We compare AlgKF with AlgMLE, an algorithm we develop for the test. AlgMLE also makes prediction in terms of  the equation (4.22) but the parameters keep constant all time once they have been learned by using MLE. AlgKF and AlgMLE run over CHI, NY and LV with different length proportions of training sequences ranged from 20\% to 90\%, respectively. We measure the accuracy by the average MSE (Mean Squared Error) on each training data set. The results are shown in Figure 2, from which we can have the following observations and analyses:
\begin {compactenum} [(1)]
\item MSE decreases with the increased length proportion of training sequences, whichever algorithm runs. This corresponds with our expectation that more historical data are used, more accurate predictions can be made.

\item The MSE incurred by AlgKF is generally lower than that incurred by AlgMLE at different length proportions of training sequences. This demonstrates that AlgKF outperforms the algorithms that keep the prediction model parameters constant because AlgKF is able to dynamically update the prediction model by Kalman Filter.   

\item The MSE difference between AlgKF and AlgMLE is more significant at less length proportions of training sequences. The underlying reason is that although the initial accuracy is low due to the small training data, AlgKF is able to dynamically learn from the errors in virtue of Kalman Filter, while AlgMLE is unable. So AlgKF can quickly and continuously improve the prediction accuracy as more and more observations come into sight, which results in that the final MSE would not be too large. 
\end {compactenum}

\begin{figure}
\centering
\epsfig{file=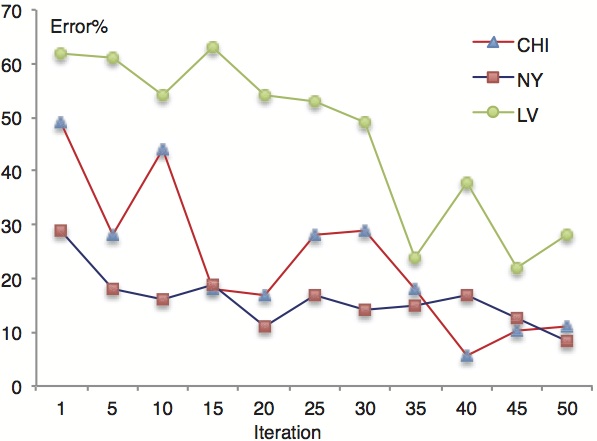, height=1.6in}
\caption{Converging speed of error.}
\end{figure}

Figure 3 further illustrates the error convergence speed on three particular SSTE sequences chosen from CHI, NY and LV respectively, where the proportion of training data is 20\%. As we can see, the error drops sharply after about 30 iterations compared with the error after first iteration.

\subsection{Test of Location Prediction}

We use \emph{Accuracy@TopN}, a metric proposed by \cite{r3}, to assess the performance of the location prediction model (equation (12)). In terms of \emph{Accuracy@TopN}, an SSTE location is predicted successfully if it is ranked in the Top-$N$ list. 

Figure 4 shows the average accuracy over the three data sets, at different Top-$N$ list sizes and at different values of the weight $\xi$, where the length proportion of training sequences is set to 80\%. At first, we can observe that the accuracy is increasing with increased $N$, no matter what value of $\xi$ is, which logically corresponds with our expectation. Figure 4 also shows different values of $\xi$ have different impacts on the accuracy. When $\xi=0$, i.e., the location of the next SSTE of a person $u$ is predicted completely by the preference of $u$'s friends, the accuracy is totally unacceptable, since even $Accuracy@Top50$ is only around 10\%. With increased $\xi$, i.e., more consideration is given to the periodicity, the accuracy increases, until $\xi=0.8$. When $\xi=1$, i.e., the location prediction is made completely in terms of the periodicity, the accuracy is lower than the accuracy at $\xi=0.8$. So, we can draw a conclusion that the location of an SSTE depends on both the periodicity and the sociality, but more on periodicity, which verifies the assumption we make in Section 1.   

\begin{figure}
\centering
\epsfig{file=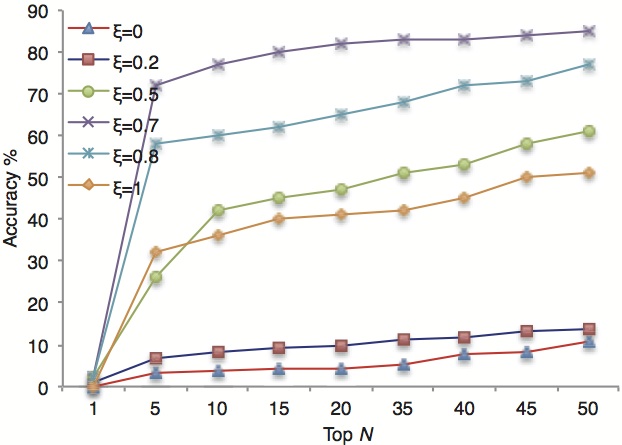, height=1.6in}
\caption{$Accuracy@TopN$ on 80\% training data.}
\end{figure}

At last, we investigate how the accuracy of location prediction changes with the length of training SSTE sequences. We show the $Accuracy@Top5$ and the $Accuracy@Top20$ at different lengths of training sequences in Figure 5(a) and (b) respectively. As we can observe from Figure 5, the accuracy generally increases with the increased length proportion of training sequences, which implies that longer the training sequence, more accurate the time prediction, and consequently more accurate the location prediction. 

\begin{figure}
\centering
\subfigure [$Accuracy@Top5$]{
\epsfig{file=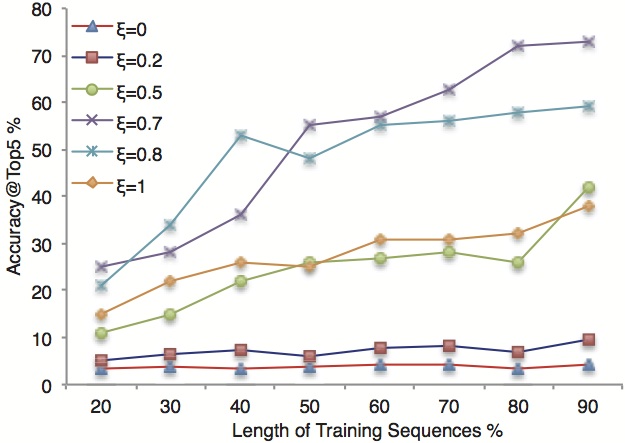, height=1.6in}
}
\hspace{0.5in}
\subfigure [$Accuracy@Top20$]{
\epsfig{file=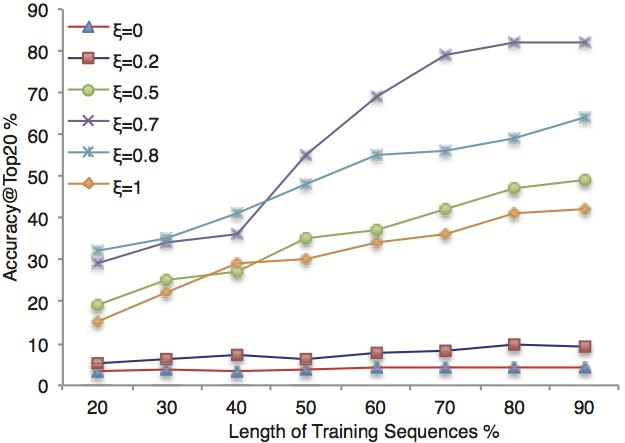, height=1.6in}
}
\caption{Accuracy at different lengths of training sequences.}
\end{figure}

\section{Related Work}
There are three domains related to our work, Spatial-Temporal Event Mining, Time Prediction, and Location Prediction.

\textbf{Spatial-Temporal Event Mining} Hady W. Lauw et al. \cite{r8} propose a spatial-temporal event model to discover social associations among individuals based on spatial-temporal co-occurrences. The concept of the spatial-temporal event proposed by \cite{r8} is different from ours since it allows an individual to participate more than one event at the same time. Xingjie Liu et al. \cite{r17} identify a new type of social network, event-based social network (EBSN) and employ an extended Fiedler method for the community detection. 

\textbf{Time Prediction} To our best knowledge, few of the previous studies focus on the prediction of the SSTE time. In \cite{r15}, Yizhou Sun et al. propose a generalized linear model based model to predict the first building time of the link between two entities who have never been linked before, rather than how long the time is before the existing entities will be linked again in the future, while the latter is exactly the objective for the time prediction of our work. Zan Huang el al. \cite{r7} proposes a time series model to evaluate the frequency of repeated links, which focuses not on the prediction of the time when a link would occur, but on the prediction of whether a link would occur.  

\textbf{Location Prediction} Location prediction has been received extensive attention in the field of trajectory analysis. Most of them are based on sequential pattern \cite{r4, r5, r11, r12}, where the time just play the role of a time-stamp. Eunjoon Cho et al. \cite{r1} and Dashun Wang et al. \cite{r2} find that movements of humans are periodic and correlated with their social network. Eunjoon Cho et al. further reveal that human movements can be explained by periodic behavior more than by social relationships, and propose a periodic and social mobility model for predicting mobility of individuals, which focuses on predicting the location of a single checkin, rather than the location of a social interaction with at least two participants. Meanwhile it is based on the time given by user, not the time predicted in advance.

Note that our work is different with the traditional event prediction. The traditional event prediction focuses on the problem of whether an event will occur at a given time point, while our work focuses on both the occurrence time and the occurrence location. 

\section{Conclusions}

In this paper, we reduce the problem of human movement prediction to the prediction of SSTEs. For the time prediction, we propose an ARMA based prediction model and a Kalman Filter based learning algorithm. Our proposed time prediction algorithm requires only the latest data and can hence incrementally update the prediction model as a new observation becomes available. For the location prediction, we propose a ranking model to predict the location of the SSTE at the predicted occurrence time, where the periodicity and the sociality of SSTEs are simultaneously taken into consideration for improving the prediction accuracy. The experimental results demonstrate that our proposed algorithms can predict the time and location of an SSTE in an acceptable accuracy.

\section*{Acknowledgments}
This work is supported by the National Science Foundation of China under Grant Nos. 61173099, 61103043 and the Doctoral Fund of Ministry of Education of China under Grant No. 20110181120062. This work is also supported in part by NSF through grants CNS-1115234, DBI-0960443, and OISE-1129076, and US Department of Army through grant W911NF-12-1-0066.

\bibliographystyle{abbrv}
\bibliography{SSTE_SDM} 

\begin{thebibliography}{10}

\bibitem{r16}
http://snap.stanford.edu/data/loc-gowalla.html.

\bibitem{r22}
O.~Abul, F.~Bonchi, and F.~Giannotti.
\newblock Hiding sequential and spatiotemporal patterns.
\newblock {\em IEEE Transactions on Knowledge and Data Engineering (TKDE)},
  22(12):1709--1723, 2010.

\bibitem{r25}
F.~Aurenhammer.
\newblock Voronoi diagrams -- a survey of a fundamental geometric data
  structure.
\newblock {\em ACM Computing Surveys}, 23(3):345--405, 1991.

\bibitem{r18}
R.~M. Bell, Y.~Koren, and C.~Volinsky.
\newblock Modeling relationships at multiple scales to improve accuracy of
  large recommender systems.
\newblock In {\em KDD'07}, pages 95--104, 2007.

\bibitem{r10}
P.~J. Brockwell and R.~A. Davis.
\newblock {\em Introduction to Time Series and Forecasting}.
\newblock Springer, 2nd edition, 2002.

\bibitem{r21}
H.~Cao, N.~Mamoulis, and D.~W. Cheung.
\newblock Mining frequent spatio-temporal sequential patterns.
\newblock In {\em ICDM'05}, pages 82--89, 2005.

\bibitem{r9}
Y.~Chen and L.~Tu.
\newblock Density-based clustering for real-time stream data.
\newblock In {\em KDD'07}, pages 133--142, 2007.

\bibitem{r1}
E.~Cho, S.~A. Myers, and J.~Leskovec.
\newblock Friendship and mobility: User movement in location-based social
  networks.
\newblock In {\em KDD'11}, pages 1082--1090, 2011.

\bibitem{r19}
N.~Eagle, A.~Pentland, and D.~Lazer.
\newblock Inferring friendship network structure by using mobile phone data.
\newblock In {\em Proceedings of the National Academy of Sciences}, volume 106,
  page 15274, 2009.

\bibitem{r12}
F.~Giannotti, M.~Nanni, D.~Pedreschi, and F.~Pinelli.
\newblock Trajectory pattern mining.
\newblock In {\em KDD'07}, pages 330--339, 2007.

\bibitem{r4}
M.~C. Gonzalez, C.~A. Hidalgo, and A.-L. Barabasi.
\newblock Understanding individual human mobility patterns.
\newblock {\em Nature}, 453(7196):779--782, 2008.

\bibitem{r7}
Z.~Huang and D.~K.~J. Lin.
\newblock The time-series link prediction problem with applications
  incommunication surveillance.
\newblock {\em INFORMS Journal on Computing}, 21(2):286--303, 2009.

\bibitem{r8}
H.~W. Lauw, E.-P. Lim, H.~Pang, and T.-T. Tan.
\newblock Stevent: Spatio-temporal event model for social network discovery.
\newblock {\em ACM Transactions on Information Systems}, 28(3), 2010.

\bibitem{r5}
P.-R. Lei, T.-J. Shen, W.-C. Peng, and I.-J. Su.
\newblock Exploring spatial-temporal trajectory model for location prediction.
\newblock In {\em MDM'11}, pages 1082--1090, 2011.

\bibitem{r17}
X.~Liu, Q.~Hey, Y.~Tiany, W.-C. Lee, J.~McPhersony, and J.~Han.
\newblock Event-based social networks: Linking the online and offline social
  worlds.
\newblock In {\em KDD'12}, pages 1032--1040, 2012.

\bibitem{r11}
A.~Monreale, F.~Pinelli, and R.~Trasarti.
\newblock Wherenext: a location predictor on trajectory pattern mining.
\newblock In {\em KDD'09}, pages 637--645, 2009.

\bibitem{r20}
M.~Morzy.
\newblock Mining frequent trajectories of moving objects for location
  prediction.
\newblock In {\em MLDM'07}, pages 667--680, 2007.

\bibitem{r3}
A.~Noulas, S.~Scellato, N.~Lathia, and C.~Mascolo.
\newblock Mining user mobility features for next place prediction in
  location-based services.
\newblock In {\em ICDM'12}, pages 1038--1043, 2012.

\bibitem{r13}
A.~Sadilek, H.~Kautz, and J.~P. Bigham.
\newblock Finding your friends and following them to where you are.
\newblock In {\em WSDM'12}, pages 723--732, 2012.

\bibitem{r15}
Y.~Sun, J.~Han, C.~C. Aggarwal, and N.~V. Chawla.
\newblock When will it happen?: relationship prediction in heterogeneous
  information networks.
\newblock In {\em WSDM'12}, pages 663--672, 2012.

\bibitem{r2}
D.~Wang, D.~Pedreschi, C.~Song, F.~Giannotti, and A.-L. Barab{\'a}si.
\newblock Human mobility, social ties, and link prediction.
\newblock In {\em KDD'11}, pages 1100--1108, 2011.

\bibitem{r14}
G.~Welch and G.~Bishop.
\newblock An introduction to the kalman filter.
\newblock In {\em SIGGRAPH'01}, 2001.

\bibitem{r26}
Y.~Zheng.
\newblock Tutorial on location-based social networks.
\newblock In {\em WWW 2012}, 2012.

\end{thebibliography}

\end{document}